\DeclareMathOperator*{\argmin}{arg\,min}
\newcommand{\pacc}{p_a}
\newcommand{\pno}{p_n}
\newcommand{\blda}{\mathbf{a}}
\newcommand{\bldb}{\mathbf{b}}
\newcommand{\nn}{\mathbb{N}}
\newcommand{\NN}{\mathbb{N}}
\newcommand{\FF}{\mathbb{F}}
\newcommand{\FFF}{\mathcal{F}}
\newcommand{\cP}{{\mathcal{P}}}
\newcommand{\cR}{{\mathcal{R}}}
\newcommand{\cH}{{\mathcal{H}}}
\DeclareMathOperator*{\disj}{\mbox{\sc Disj}}
\DeclareMathOperator*{\comm}{\mbox{\sc Comm}}
\newcommand{\regzero}{{0}}
\newtheorem{theorem}{\indent Theorem}[section]
\newtheorem{lemma}[theorem]{\indent Lemma}
\newtheorem{example}{\indent Example}[section]
\newtheorem{definition}[theorem]{\indent Definition}
\title{\vspace{18pt}Two-Party Function Computation \\on the Reconciled Data}
\author{{\bf Ivo Kubjas and Vitaly Skachek} \\
Institute of Computer Science \\
University of Tartu, Estonia \\
\texttt{\{ivokub, vitaly.skachek\}@ut.ee}
}
\begin{document}

\maketitle
\footnotetext[1]{This work is supported in part by the grant EMP133 from the
Norwegian-Estonian Research Cooperation Programme and by the grants PUT405 and
IUT2-1 from the Estonian Research Council.}

\begin{abstract}

In this paper, we initiate a study of a new problem termed \emph{function
computation on the reconciled data}, which generalizes a set reconciliation
problem in the literature.  Assume a distributed data storage system with two
users $A$ and $B$. The users possess a collection of binary vectors $S_{A}$ and
$S_{B}$, respectively. They are interested in computing a function $\phi$ of the
reconciled data $S_{A} \cup S_{B}$.

It is shown that any deterministic protocol, which computes a \emph{sum} and a
\emph{product} of reconciled sets of binary vectors represented as nonnegative integers, has to communicate at
least $2^n + n - 1$ and $2^n + n - 2$ bits in the worst-case scenario,
respectively, where $n$ is the length of the binary vectors.  Connections to
other problems in computer science, such as \emph{set disjointness} and
\emph{finding the intersection}, are established, yielding a variety of
additional upper and lower bounds on the communication complexity.  A protocol
for computation of a sum function, which is based on use of a family of hash
functions, is presented, and its characteristics are analyzed.
\end{abstract}

\section{Introduction}

The problem of data synchronization arises in many applications
in distributed data storage systems and data networks.
For instance, consider a number of users that concurrently access and update
a jointly used distributively stored large database. When one of the users
makes an update in the data stored locally, the other users are not immediately aware
of the change, and thus an efficient method for synchronization of the data
is required. This practical problem arises in many systems
that store big amounts of data, including those employed by companies such
as Dropbox, Google, Amazon, and others.

The problem of data synchronization was studied in the literature over the recent years.
A variation of this problem termed \emph{two-party set reconciliation}
considers a scenario, where two users communicate via a direct
bi-directional noiseless channel. The users, $A$ and $B$, possess respective
sets $S_{A}$ and $S_{B}$ of binary vectors.  The users execute a communications
protocol by sending binary messages to each other.  At the end of the protocol,
each of the users knows $S_A \cup S_B$. Set reconciliation problem was first studied
in~\cite{Minsky}. Some of the recent works that investigate this problem
are~\cite{Eppstein, Goodrich, Ivo-thesis, Biff-codes, Skachek-Rabbat}.
A number of protocols for set reconciliation were proposed, and their theoretical
performance was analyzed. All aforementioned protocols communicate amount of data,
which is asymptotically optimal.

In practical data storage systems, sometimes only a function of the stored data can be requested by
some user, and not the data itself. It can be more efficient to compute a function by a group of servers, rather that
to provide the full data required for such a computation by the user (see, for instance, Example~\ref{ex:max} below).
Therefore, it is an important question how to compute various functions of the data
distributed among a number of servers.

The domain of distributed function computation is a mature area, which has been very extensively studied
both in computer science and information theory communities. The reader can refer,
for example, to~\cite{Kushilevitz-Nisan},~\cite{Lovasz},~\cite{Orlitsky},~\cite{Orlitsky2},~\cite{Yao1979},
and many others. In a standard model, a number of users want to compute jointly a function of the
data that they possess. This needs to be achieved by communicating the smallest possible
number of bits. This class of problems is very broad, and it covers settings with various types of functions, two versus many users, deterministic and randomized protocols, with or without privacy requirements, etc.

Motivated by the above challenges, in this work, we propose a new problem,
which we term \emph{function computation on the reconciled data}.
To the best of our knowledge, this problem was not studied in the literature yet.
In this problem, the users compute a function of their reconciled data.
It is obvious that this problem can be solved by reconciling
the data first, and then by computing the function of this data by the users.
However, as we demonstrate in the sequel, this approach is not always optimal in
terms of a number of communicated bits.

This paper is structured as follows. In Section~\ref{sec:settings},
the problem of function computation on the reconciled data is introduced.
In Section~\ref{sec:reconciliation}, known methods for set reconciliation are surveyed.
It is shown that using reconciliation as a subroutine does not necessarily yield an optimal solution.
A number of bounds on the communication complexity of sum computation on the reconciled data
are obtained in Section~\ref{section:lower_bounds}. Connections to some known problems in computer science
are established in Section~\ref{sec:connections}. A protocol for
computation of sum using universal hash functions and its analysis are presented in Section~\ref{sec:hash_functions}.
The results are summarized in Section~\ref{sec:summary}.

\section{Problem settings}
\label{sec:settings}

Let $\FF = \{0,1\}$ be a binary field. Denote by $\FF^n$ the vector space of
dimension $n$ over $\FF$. By slightly abusing the notation, sometimes we treat
$\FF^n$ as a set of all vectors of length $n$ over $\FF$, or, as a set of nonnegative integers in their $n$-bit long binary representation. Let the set of all
subsets of $\FF^n$ be $\cP(\FF^n)$.  We denote $[\ell] \triangleq \{1, 2,
\cdots, \ell\}$.

Consider two users, $A$ and $B$, possessing sets $S_{A}, S_{B} \subseteq \FF^n$,
respectively. We denote the intersection of these two sets as $S_0 = S_{A} \cap
S_{B}$. The sizes of these sets are given as $m_0 = |S_0|$, $m_{A} = |S_{A}|$
and $m_{B} = |S_{B}|$. Additionally, it is assumed that $\max\{m_{A}, m_{B}\}
\le \kappa$.  Denote the sizes of the set differences as $d_{A} = |S_{A}
\setminus S_0|$, $d_{B} = |S_{B} \setminus S_0|$ and $d = d_A + d_B$. We assume
hereby that $A$ knows the values of $d_{A}$ and $m_0$, and that $B$ knows the
values of $d_{B}$ and $m_0$.

The users $A$ and $B$ want to compute cooperatively a function $f : \cP(\FF^n)
\times \cP(\FF^n) \rightarrow V$, where $V$ is the range of $f$.  The functions
that we consider in this work are all defined over the reconciled data, namely
they have the form $f(S_A, S_B) = \phi(S_{A} \cup S_{B})$, where $S_{A} \cup
S_{B}$ is a standard set-theoretic union of the two sets, and $\phi : \cP(\FF^n)
\rightarrow V$.  In order to do so, $A$ and $B$ jointly execute a communications
protocol, according to which they send binary messages to each other.
Specifically, the protocol $F$ consists of the messages
\begin{align*}
    & M_1 = (w_{1,1}, w_{1,2}, \ldots, w_{1,p_{1}}) \in \FF^{p_1},\\
    & M_2 = (w_{2,1}, w_{2,2}, \ldots, w_{2,p_{2}}) \in \FF^{p_2},\\
    & \vdots\\
    & M_r = (w_{r,1}, w_{r,2}, \ldots, w_{r,p_{r}}) \in \FF^{p_r} ,
\end{align*}
which are sent alternately between $A$ and $B$.  After the message $M_r$ is
sent, both users can compute the value of $f(S_{A},S_{B})$.  The number of
messages $r$ is called the number of rounds of the protocol.

Communication complexity $\comm\left(F\right)$ of the protocol $F$ is defined as
the minimum total number of bits $\sum_{i=1}^r p_i$ that are sent between the
users in the worst-case scenario for all $S_{A}, S_{B} \in \cP(\FF^n)$.

There are different models of how the protocols use randomness. In
\emph{deterministic} protocol, we assume that all computations and messages sent
by the users are deterministic, and they are uniquely determined by the sets
$S_A$ and $S_B$. By following the discussion
in~\cite{DBLP:journals/toc/HastadW07}, we consider several \emph{randomized}
protocol models. In a protocol with \emph{shared randomness}, both users $A$ and
$B$ have access to an infinite sequence of independent unbiased random bits. The
users are expected to compute the function correctly with probability close to
1. By contrast, in a protocol with \emph{private randomness}, each user
possesses its own string of random bits.  Finally, in the ``Las-Vegas''-type
protocol, at the end of the protocol the users always compute the function
correctly, but the number of communicated bits is a random variable, and the
complexity is measured as the expected number of the communicated bits.

\section{Connection to set reconciliation}
\label{sec:reconciliation}

The set reconciliation problem can be viewed as a function computation problem
on the reconciled data, where the function $\phi$ is an identity, namely,
$f(S_A, S_B) = S_A \cup S_B$. A number of protocols were proposed in the
literature for efficient distributed set reconciliation with two users.
In~\cite{Minsky}, interpolation of characteristic polynomials over a Galois
field is used.  The proposed deterministic protocol assumes the knowledge of
approximate values of $d_A$ and $d_B$, and it achieves $\comm(F) = O(d n)$,
which is asymptotically communication-optimal. In particular, when $d$ is small
compared to $n$, that protocol clearly outperforms a naive reconciliation
scheme, where the users simply exchange their data.

Another randomized protocol, which employs invertible Bloom filters, was
presented in~\cite{Eppstein, Goodrich}. Alternatively, it was proposed to use
so-called $\emph{biff codes}$ for randomized set reconciliation
in~\cite{Biff-codes}. Finally, a randomized protocol that uses techniques akin
to linear network coding were employed in~\cite{Skachek-Rabbat} leading to yet
another reconciliation protocol. The latter method assumes existence of certain
family of pseudo-random hash functions. All mentioned randomized protocols have
asymptotically optimal communication complexity $\comm(F) = O(d n)$.

We note that a problem of computing any function $f$ can be solved by $A$ and
$B$ by reconciling their data first, and then by computing $f$ by each user
separately (or by one of the users). By using this method, the communication
complexity is determined by the complexity of the underlying set reconciliation
protocol. For example, for each of the aforementioned protocols, $\comm(F) = O(d
n)$. Sometimes, an improvement in communication complexity can be obtained by
using one-directional reconciliation, namely, when the data is reconciled by
only one user, and then the function value is sent back to the other user.
However, if $d_A \approx d_B$, this approach does not lead to asymptotic
improvement.

As the following example illustrates, some functions can be computed by a
deterministic protocol with much smaller communication complexity.

\begin{example}
\label{ex:max}
Assume that $A$ and $B$ are interested in computing $f(S_A, S_B) = \max\{S_A
\cup S_B \}$, where all entries in $S_A \cup S_B$ are viewed as non-negative
integer numbers in their binary representation.  The following protocol requires
only $2n$-bit communication.
\begin{enumerate}
\item
The users $A$ and $B$ compute $x_A = \max\{ S_A \}$ and $x_B = \max\{ S_B \}$,
respectively.
\item
The users $A$ and $B$ exchange the values of $x_A$ and $x_B$.
\item
Each user computes $\max\{ x_A, x_B\}$.
\end{enumerate}
\end{example}

Analogous protocol can be used to compute a number of other idempotent functions
$\phi$, such as \emph{minimum}, bit-wise logical \emph{or} and bit-wise logical
\emph{and}. It is an interesting question, however, what is the worst-case
number of communicated bits for computing different functions on the reconciled
data. We partly answer this question for some of the functions in the sequel.

\section{Lower bounds using $f$-monochromatic rectangles}
\label{section:lower_bounds}

\subsection{Sum over integers}
\label{subsection:sum_integers}

In this section, we consider the function $f$ with the integer range, defined as
follows:
\begin{equation}
    f(S_{A}, S_{B}) = \sum\limits_{x \in S_{A} \cup S_{B}} x \; ,
    \label{eq:funct_sum}
\end{equation}
where every string $x \in S_{A} \cup S_{B}$ can be viewed as an integer in its
binary representation.

We introduce the following definition, which is taken from~\cite[Definition
1]{Kushilevitz97}.

\begin{definition}
Let $\eta \in \nn$ and $f \, : \, \FF^\eta \times \FF^\eta \rightarrow V$ be a
function with range $V$.  A rectangle is a subset of $\FF^\eta \times \FF^\eta$
of the form $X_1 \times X_2$, where $X_1, X_2 \subseteq \FF^\eta$.  A rectangle
$X_1 \times X_2$ is called $f$-monochromatic if for every $x \in X_1$ and $y \in
X_2$, the value of $f(x, y)$ is the same.
\end{definition}

\begin{lemma} \cite[Proposition 1.13]{Kushilevitz-Nisan}
\label{lemma:rectangle-diagonal}
Let $R \subseteq \FF^\eta \times \FF^\eta$. Then $R$ is a rectangle if and only
if
\begin{align}
    (x_1, y_1) \in R \mbox{ and } (x_2, y_2) \in R \;
    \Longrightarrow \; (x_1, y_2) \in R \; .
\end{align}
\end{lemma}

\begin{definition} \cite{Kushilevitz97}
Let $f \; : \; \FF^\eta \times \FF^\eta \rightarrow V$ be a function.  Denote by
$\cR(f)$ the minimum number of $f$-monochromatic rectangles that partition the
space of $\FF^\eta \times \FF^\eta$.
\end{definition}

We use the following lemma, which is stated in~\cite[Lemma 2]{Kushilevitz97}.
It allows to reformulate the problem of lower-bounding communication complexity
as a problem in combinatorics.

\begin{lemma}
\label{lemma:rectangles}
Let $f \; : \; \FF^\eta \times \FF^\eta \rightarrow V$ be a function, which is
computed using protocol $F$. Then,
\[
    \comm(F) \ge \log_2 (\cR(f)) \; .
\]
\end{lemma}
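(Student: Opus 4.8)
The plan is to exploit the combinatorial structure that a protocol imposes on the input space, via the standard protocol-tree argument. I would represent the deterministic protocol $F$ as a binary tree in which every internal node is labeled by the player whose turn it is to speak, the two outgoing edges corresponding to the bit that player transmits. Each input pair $(x,y) \in \FF^\eta \times \FF^\eta$ traces a unique root-to-leaf path, determined by the concatenation of all transmitted bits; call the resulting leaf (equivalently, the transcript) $\ell(x,y)$, and for a leaf $\ell$ set $R_\ell = \{(x,y) : \ell(x,y) = \ell\}$. Since every input reaches exactly one leaf, the family $\{R_\ell\}_\ell$ is a partition of $\FF^\eta \times \FF^\eta$, and the number of parts equals the number of leaves.

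The first key step is to show that each $R_\ell$ is a rectangle, for which I would invoke Lemma~\ref{lemma:rectangle-diagonal}: it suffices to prove that $(x_1,y_1) \in R_\ell$ and $(x_2,y_2) \in R_\ell$ imply $(x_1,y_2) \in R_\ell$. I would establish this by induction on the rounds, tracking the transcript prefix. The crucial observation is that whenever $A$ speaks the transmitted bit is a function of $x$ and the history seen so far, and symmetrically for $B$ with $y$. Because $(x_1,y_1)$ and $(x_2,y_2)$ share the full transcript, their prefixes agree at every round; feeding $(x_1,y_2)$ into the protocol, the $A$-moves then replicate those of $(x_1,y_1)$ while the $B$-moves replicate those of $(x_2,y_2)$, so the two executions stay synchronized and $(x_1,y_2)$ produces the same leaf $\ell$.

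The second step is monochromaticity. Since $F$ computes $f$, the value output at a leaf $\ell$ is a single value determined by the transcript alone; hence $f(x,y)$ is constant over all $(x,y) \in R_\ell$, so $R_\ell$ is $f$-monochromatic. Combining the two steps, $\{R_\ell\}_\ell$ is a partition of $\FF^\eta \times \FF^\eta$ into $f$-monochromatic rectangles, and therefore $\cR(f)$, the minimum size of such a partition, is at most the number of leaves. Finally I would bound that number: since the worst-case number of communicated bits is $\comm(F)$, every root-to-leaf path has length at most $\comm(F)$, so the tree has at most $2^{\comm(F)}$ leaves. Thus $\cR(f) \le 2^{\comm(F)}$, and taking base-$2$ logarithms yields $\comm(F) \ge \log_2(\cR(f))$.

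I expect the only genuinely delicate point to be the rectangle property in the first step. The inductive bookkeeping that keeps the two executions' transcripts in lock-step, together with the reliance on the fact that each message is a function of the sender's \emph{own} input and the public history (and not of the other party's input), is precisely what forces the product structure of $R_\ell$; everything hinges on this. The counting and monochromaticity arguments are then routine once the rectangle property is in hand.
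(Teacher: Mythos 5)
Your proof is correct: it is the classical protocol-tree argument (leaves induce a partition into $f$-monochromatic rectangles, and a protocol with worst-case cost $\comm(F)$ has at most $2^{\comm(F)}$ leaves), which is exactly the proof in Kushilevitz--Nisan. The paper itself gives no proof of Lemma~\ref{lemma:rectangles} --- it only cites that reference --- so your write-up simply supplies the standard argument the paper defers to.
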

The proof of the lemma is given in~\cite{Kushilevitz-Nisan}.

In order to be able to use Lemma~\ref{lemma:rectangles}, we need to represent
the inputs $S_A$ and $S_B$ as binary vectors. A natural way to do that is by
using binary characteristic vectors $\blda$ and $\bldb$ of length $\eta = 2^n$.

\begin{theorem}
\label{thrm:sum-improved}
The number of bits communicated between $A$ and $B$ in any
deterministic protocol $F$ that computes the function $f$ defined
in~(\ref{eq:funct_sum}) is at least
\[
    \comm(F) \ge 2^n + n - 1 \; .
\]
\end{theorem}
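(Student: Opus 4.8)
The plan is to bound $\cR(f)$ from below and then invoke Lemma~\ref{lemma:rectangles}, which gives $\comm(F) \ge \log_2 \cR(f)$. I represent the inputs by characteristic vectors $\blda,\bldb \in \FF^{\eta}$ with $\eta = 2^n$, indexing the coordinates by the integers $0,1,\dots,2^n-1$, so that $f(\blda,\bldb) = \sum_i i\cdot(\blda_i \vee \bldb_i)$. The engine of the argument is that fooling sets lower-bound the number of monochromatic rectangles of a \emph{fixed} value: if a partition of $\FF^{\eta}\times\FF^{\eta}$ into $f$-monochromatic rectangles uses $r_v$ rectangles of value $v$, then, by Lemma~\ref{lemma:rectangle-diagonal}, two elements of a fooling set contained in $f^{-1}(v)$ can never lie in a common rectangle, so $r_v$ is at least the size of that fooling set. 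Since a partition is a disjoint union over distinct value classes, $\cR(f) = \sum_v r_v$, and the task reduces to exhibiting, in several distinct value classes, fooling sets whose sizes add up to enough.

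I then construct the fooling sets uniformly. For a target set $U \subseteq \{1,\dots,2^n-1\}$ consider the complement pairs supported on $U$: all pairs $(\blda,\bldb)$ with $\blda$ arbitrary on $U$, with $\bldb = \overline{\blda}$ on $U$, and with both vectors zero on every coordinate outside $U$. Each such pair has $S_A\cup S_B = U$ and hence value $\sum_{x\in U}x$, and there are $2^{|U|}$ of them. I claim this is a fooling set inside $f^{-1}\!\big(\sum_{x\in U}x\big)$: for two distinct pairs built from $\blda\neq\blda'$, the cross input $(\blda,\overline{\blda'})$ is again supported inside $U$, so its union is a subset of $U$, and its value equals $\sum_{x\in U}x$ only when that union is all of $U$, which forces the support of $\blda'$ to be contained in that of $\blda$; symmetrically, $(\blda',\overline{\blda})$ attains the value only when the support of $\blda$ is contained in that of $\blda'$. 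These two containments cannot both hold when $\blda\neq\blda'$, so at least one cross value differs, as required.

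I apply this with $U_0 = \{1,\dots,2^n-1\}$ and $U_j = \{1,\dots,2^n-1\}\setminus\{j\}$ for $j=1,\dots,2^n-1$. The associated values are $M = \sum_{x=1}^{2^n-1}x$ for $U_0$ and $M-j$ for $U_j$; these $2^n$ integers are pairwise distinct, so the corresponding rectangle counts add. The fooling sets have sizes $2^{|U_0|}=2^{2^n-1}$ and $2^{|U_j|}=2^{2^n-2}$, whence
\[
    \cR(f) \ \ge\ 2^{2^n-1} + (2^n-1)\,2^{2^n-2} \ =\ 2^{2^n-2}\,(2^n+1).
\]
Taking logarithms gives $\log_2\cR(f) \ge 2^n-2+\log_2(2^n+1) > 2^n+n-2$, and since $\comm(F)$ is an integer bounded below by $\log_2\cR(f)$, it is at least $2^n+n-1$. (As a sanity check, for $n=1$ this predicts $\cR(f)\ge 3$, which matches the exact value.)

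The step I expect to demand the most care is the verification of the fooling property, precisely because $f$ is a subset-sum whose value classes exhibit collisions: coordinate $0$ is a don't-care, and many different unions can share the same sum, so one must not identify $f^{-1}(M)$ with the single union $\{1,\dots,2^n-1\}$. Restricting every fooling-set element to be supported inside the fixed target $U$ is what circumvents this, since it forces every cross union to be a subset of $U$; attaining the value $\sum_{x\in U}x$ is then \emph{equivalent} to the union being exactly $U$, and the fooling condition collapses to the purely combinatorial fact that support containment cannot hold in both directions unless $\blda=\blda'$. The only remaining thing to watch is that the chosen target values $M,M-1,\dots,M-(2^n-1)$ are genuinely distinct, so that the per-value bounds $r_v$ may legitimately be summed.
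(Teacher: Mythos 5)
Your proposal is correct and follows essentially the same route as the paper: the paper's families $\FFF_0 = \{(Y,\Phi\setminus Y)\}$ and $\FFF_\ell = \{(Z,\Phi_\ell\setminus Z)\}$ with $\Phi = \FF^n\setminus\{0\}$ and $\Phi_\ell = \FF^n\setminus\{0,\ell\}$ are exactly your complement pairs supported on $U_0$ and $U_j$, and the count $2^{2^n-1} + (2^n-1)\,2^{2^n-2} = 2^{2^n-2}(2^n+1)$ followed by integrality is identical. Your version is if anything slightly more careful, since you explicitly verify the fooling property via the two support containments and explicitly justify summing the per-value rectangle counts (distinctness of the values $M, M-1, \ldots, M-(2^n-1)$), a point the paper leaves implicit.
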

\begin{proof}
The proof is done by estimating the number of $f$-monochromatic rectangles,
where $f$ is given by~(\ref{eq:funct_sum}).

Denote $\Phi \triangleq \FF^n \setminus \{0\}$, where the elements of $\Phi$ can
be viewed as integers in $[2^n-1]$.  We use the following set of pairs of
subsets
\[
    \FFF_0 = \left\{ (Y, \Phi {\setminus} Y): Y \subseteq \Phi \right\}
    \triangleq \{(Y_i, Y'_i): i \in [2^{2^n-1}]\} \; .
\]

Then, for every $(Y_i, Y'_i) \in \FFF_0$, we have
\begin{align*}
    f(Y_i, Y'_i) = \sum_{i=1}^{2^n-1} i = 2^{n-1} (2^n - 1) \; .
\end{align*}

On the other hand, take $i, j \in [2^{2^n-2}]$, such that $i \neq j$.  We have
two cases:
\begin{itemize}
\item
If $Y_i \cup Y'_j \neq \Phi$, then there exists $x \in \Phi$, such that $x
\notin Y_i \cup Y'_j$. In that case, clearly,
\[
f(Y_i, Y'_j) < 2^{n-1} (2^n - 1) \; .
\]
\item
If $Y_i \cup Y'_j = \Phi$, since $S_i \neq S_j$, there exists $x \in Y_i \cap
Y'_j$. Thus, $x \not\in Y'_i \cup Y_j$, and therefore
\[
f(Y_j, Y'_i) < 2^{n-1} (2^n - 1) \; .
\]
\end{itemize}

Therefore, due to Lemma~\ref{lemma:rectangle-diagonal}, there are at least $2^{2^n-1}$ different $f$-monochromatic rectangles
consisting of the elements of $\FFF_0$.

Additionally, for any $\ell \in [2^n - 1]$, denote $\Phi_\ell \triangleq \FF^n
\setminus \{0,\ell\}$.  We use the following pairs
\[
    \FFF_\ell = \{ (Z, \Phi_\ell {\setminus} Z): Z \subseteq \Phi_\ell \}
		\triangleq \{(Z_i, Z'_i): i \in [2^{2^n-2}]\} \; .
\]

Then, for every $(Z_i, Z'_i) \in \FFF_\ell$, we have
\begin{align*}
    f(Z_i, Z'_i) = \sum_{i=1}^{2^n-1} i - \ell = 2^{n-1} (2^n - 1) - \ell \; .
\end{align*}

On the other hand, take $i, j \in [2^{2^n-1}]$, such that $i \neq j$.  Similarly
to the previous case, it can be shown that either
\[
f(Z_j, Z'_i) < 2^{n-1} (2^n - 1) - \ell \; \mbox{ or } \; f(Z_i, Z'_j) < 2^{n-1} (2^n - 1) - \ell \; .
\]

Therefore, due to Lemma~\ref{lemma:rectangle-diagonal}, there are at least $2^{2^n-2}$ different $f$-monochromatic rectangles
consisting of the elements of $\FFF_\ell$.  Since $\ell$ can be chosen in
$2^n-1$ ways, we conclude that the number of different $f$-monochromatic
rectangles is at least
\begin{eqnarray*}
\cR(f) & \ge & 2^{2^n-1} + (2^n-1) \cdot (2^{2^n-2}) \\
& = & (2^{2^n-2}) \cdot \left( 2^n+1 \right) \\
& > & 2^{2^n + n -2} \; .
\end{eqnarray*}
Finally, by applying Lemma~\ref{lemma:rectangles}, and by rounding the result up
to the next bit, we obtain that
$\comm(F) \ge 2^n + n - 1$.
\end{proof}

\begin{example}

\begin{figure}
    \newcommand{\FIGPURP}{white}
    \newcommand{\FIGMAINPURP}{purple}
    \newcommand{\FIGBLUE}{white}
    \newcommand{\FIGMAINBLUE}{blue}
    \newcommand{\FIGGREEN}{white}
    \newcommand{\FIGYEL}{white}
    \newcommand{\FIGBLACK}{black}
    \newcommand{\FIGSIDEBLACK}{white}
    \newcommand{\FIGBROWN}{white}
    \newcommand{\FIGMAINBROWN}{brown}
    \begin{tikzpicture}[ampersand replacement=\&]
        \matrix [matrix of math nodes] (m)
        {%
            \& \emptyset \& \{1\} \& \{2\} \& \{3\} \& \{1,2\} \& \{1,3\} \&
            \{2,3\} \& \{1,2,3\} \\
            \{1,2,3\} \&
            |[fill=\FIGBLACK!20]| 6 \& |[fill=\FIGSIDEBLACK!20]| 6 \&
            |[fill=\FIGSIDEBLACK!20]| 6 \& |[fill=\FIGSIDEBLACK!20]| 6 \&
            |[fill=\FIGSIDEBLACK!20]| 6 \& |[fill=\FIGSIDEBLACK!20]| 6 \&
            |[fill=\FIGSIDEBLACK!20]| 6 \& |[fill=\FIGSIDEBLACK!20]| 6 \\
            \{2,3\} \&
            |[fill=\FIGMAINPURP!20]| 5 \& |[fill=\FIGBLACK!20]| 6 \&
            |[fill=\FIGPURP!20]| 5 \& |[fill=\FIGPURP!20]| 5 \&
            |[fill=\FIGSIDEBLACK!20]| 6 \& |[fill=\FIGSIDEBLACK!20]| 6 \&
            |[fill=\FIGPURP!20]| 5 \& |[fill=\FIGSIDEBLACK!20]| 6 \\
            \{1,3\} \&
            |[fill=\FIGMAINBROWN!20]| 4 \& |[fill=\FIGBROWN!20]| 4 \&
            |[fill=\FIGBLACK!20]| 6 \& |[fill=\FIGBROWN!20]| 4 \&
            |[fill=\FIGSIDEBLACK!20]| 6 \& |[fill=\FIGBROWN!20]| 4 \&
            |[fill=\FIGSIDEBLACK!20]| 6 \& |[fill=\FIGSIDEBLACK!20]| 6 \\
            \{1,2\} \&
            |[fill=\FIGMAINBLUE!20]| 3 \& |[fill=\FIGBLUE!20]| 3 \&
            |[fill=\FIGBLUE!20]| 3 \& |[fill=\FIGBLACK!20]| 6 \&
            |[fill=\FIGBLUE!20]| 3 \& |[fill=\FIGSIDEBLACK!20]| 6 \&
            |[fill=\FIGSIDEBLACK!20]| 6 \& |[fill=\FIGSIDEBLACK!20]| 6 \\
            \{3\} \&
            |[fill=\FIGBLUE!20]| 3 \& |[fill=\FIGMAINBROWN!20]| 4 \&
            |[fill=\FIGMAINPURP!20]| 5 \& |[fill=\FIGBLUE!20]| 3 \&
            |[fill=\FIGBLACK!20]| 6 \& |[fill=\FIGBROWN!20]| 4 \&
            |[fill=\FIGPURP!20]| 5 \& |[fill=\FIGSIDEBLACK!20]| 6 \\
            \{2\} \&
            |[fill=\FIGGREEN!20]| 2 \& |[fill=\FIGMAINBLUE!20]| 3 \&
            |[fill=\FIGGREEN!20]| 2 \& |[fill=\FIGMAINPURP!20]| 5 \&
            |[fill=\FIGBLUE!20]| 3 \& |[fill=\FIGBLACK!20]| 6 \&
            |[fill=\FIGPURP!20]| 5 \& |[fill=\FIGSIDEBLACK!20]| 6 \\
            \{1\} \&
            |[fill=\FIGYEL!20]| 1 \& |[fill=\FIGYEL!20]| 1 \&
            |[fill=\FIGMAINBLUE!20]| 3 \& |[fill=\FIGMAINBROWN!20]| 4 \&
            |[fill=\FIGBLUE!20]| 3 \& |[fill=\FIGBROWN!20]| 4 \&
            |[fill=\FIGBLACK!20]| 6 \& |[fill=\FIGSIDEBLACK!20]| 6 \\
            \emptyset \&
             0 \&
            |[fill=\FIGYEL!20]| 1 \&
            |[fill=\FIGGREEN!20]| 2 \& |[fill=\FIGBLUE!20]| 3 \&
            |[fill=\FIGMAINBLUE!20]| 3 \& |[fill=\FIGMAINBROWN!20]| 4 \&
            |[fill=\FIGMAINPURP!20]| 5 \& |[fill=\FIGBLACK!20]| 6 \\
        };
    \end{tikzpicture}
    \caption{Example of $f$-monochromatic rectangles in the proof of
        Theorem~\ref{thrm:sum-improved} for $n=2$}
    \label{fig:sumproof}
\end{figure}

In Figure~\ref{fig:sumproof}, we show $f$-monochromatic rectangles whose
existence is proved in Theorem~\ref{thrm:sum-improved}. Four sets of
$f$-monochromatic rectangles,   $\FFF_0$, $\FFF_1$, $\FFF_2$ and $\FFF_3$, are
shown in four different colors.  Each set contains a number of a single-entry
$f$-monochromatic rectangles.

We see that the total number of monochromatic rectangles is at least
\begin{eqnarray*}
    \cR(f) & \geq & |\FFF_0| + |\FFF_1| + |\FFF_2| + |\FFF_3| \\
        & = & 8 + 4 + 4 + 4  \\
        & = & 20 \; .
\end{eqnarray*}

By using Lemma~\ref{lemma:rectangles}, the communication complexity is at least
$\log_2(\cR(f)) = \log_2(20)$ bits. By rounding up to the next integer, we
obtain that $\comm(f) \geq 5$.

We remark that the result can be slightly improved by using the fact that there
are additional rectangles corresponding to the values $0$, $1$ and $2$. However,
that improvement is relatively small, and thus we omit it for the sake of
simplicity.
\end{example}

We also note that there is a trivial deterministic protocol that computes $f$ by
using $2^n + 2n - 2$ bits: first, $A$ sends the  characteristic vector $\blda$
of $S_A$ of length $2^n-1$ (note that zero does not effect the sum) to $B$, then
$B$ computes $f$ and sends the result back to $A$. Since the sum requires $2n-1$
bits to represent, the claimed result follows.

\subsection{Multiplication over integers}
\label{subsection:mult_integers}

As before, let $S_{A}, S_{B} \subseteq \FF^n$. Consider the function $f$ with
the integer range, defined as follows:
\begin{equation}
    f(S_{A}, S_{B}) = \prod\limits_{x \in S_{A} \cup S_{B}} x \; .
    \label{eq:funct_prod}
\end{equation}
The following theorem presents a lower bound on the communication complexity of
a two-party deterministic protocol for computation of this $f$.
\begin{theorem}
The number of bits communicated between $A$ and $B$ in any deterministic
protocol $F$ that computes the function $f$ defined in~(\ref{eq:funct_prod}) is
at least
\[
    \comm(F) \ge 2^n + n -2 \; .
\]
\end{theorem}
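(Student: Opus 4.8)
The plan is to mirror the counting argument of Theorem~\ref{thrm:sum-improved}, replacing the additive structure by the multiplicative one and carefully accounting for the two elements that play a special role for a product: the annihilator $0$ and the multiplicative identity $1$. Since any union containing $0$ forces $f=0$, and since the factor $1$ never changes a product, I would work inside $\Phi \triangleq \FF^n \setminus \{0,1\}$, whose elements are the integers $2,3,\ldots,2^n-1$, so that $\prod_{x\in\Phi} x = (2^n-1)!$.

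First I would set up the ``full'' family $\FFF_0 = \{(Y,\Phi\setminus Y): Y\subseteq\Phi\}$, every pair of which has union $\Phi$ and hence value $f = (2^n-1)!$. To show these $2^{2^n-2}$ pairs lie in pairwise distinct $f$-monochromatic rectangles, I would use that $Y_i \cup Y'_j = \Phi \setminus (Y_j \setminus Y_i)$, so for $i\neq j$ one has $f(Y_i,Y'_j) = (2^n-1)!/\prod_{x\in Y_j\setminus Y_i}x$ and symmetrically $f(Y_j,Y'_i) = (2^n-1)!/\prod_{x\in Y_i\setminus Y_j}x$. Because every element of $\Phi$ is at least $2$, and because $Y_i\neq Y_j$ forces at least one of $Y_j\setminus Y_i$, $Y_i\setminus Y_j$ to be nonempty, at least one of these two products exceeds $1$, so at least one off-diagonal value is strictly smaller than $(2^n-1)!$. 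Lemma~\ref{lemma:rectangle-diagonal} then forbids $(Y_i,Y'_i)$ and $(Y_j,Y'_j)$ from sharing a rectangle.

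Next I would introduce, for each $\ell\in\{2,\ldots,2^n-1\}$, the reduced family $\FFF_\ell=\{(Z,\Phi_\ell\setminus Z):Z\subseteq\Phi_\ell\}$ with $\Phi_\ell=\Phi\setminus\{\ell\}$, whose common value is $(2^n-1)!/\ell$. The same diagonal argument yields $2^{2^n-3}$ distinct rectangles within each $\FFF_\ell$, and since the values $(2^n-1)!/\ell$ are pairwise distinct and all strictly below $(2^n-1)!$, rectangles coming from different families (and from $\FFF_0$) are automatically distinct. Summing, $\cR(f)\ge 2^{2^n-2}+(2^n-2)\cdot 2^{2^n-3}=2^{2^n+n-3}$.

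The main obstacle is that this literal analogue of the sum proof lands \emph{exactly} on $2^{2^n+n-3}$, which after Lemma~\ref{lemma:rectangles} and rounding would yield only $2^n+n-3$, one bit short of the claim. The fix is to exhibit one further rectangle: any input pair whose union contains $0$ has value $f=0$, so every partition into $f$-monochromatic rectangles must contain at least one rectangle of value $0$, and this value differs from $(2^n-1)!$ and from all $(2^n-1)!/\ell$. Hence $\cR(f)\ge 2^{2^n+n-3}+1 > 2^{2^n+n-3}$, and Lemma~\ref{lemma:rectangles} together with rounding up gives $\comm(F)\ge 2^n+n-2$.
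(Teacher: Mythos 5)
Your proposal is correct and follows essentially the same route as the paper's proof: the same families $\FFF_0$ and $\FFF_\ell$ over $\Phi = \FF^n\setminus\{0,1\}$, the same diagonal-distinctness argument via Lemma~\ref{lemma:rectangle-diagonal} (you merely phrase the paper's two-case analysis as the single identity $Y_i\cup Y'_j=\Phi\setminus(Y_j\setminus Y_i)$), and the same extra rectangle for the value $0$ yielding $\cR(f)\ge 2^{2^n+n-3}+1$. Your explicit remark that the ``$+1$'' rectangle is exactly what rescues the final bit after rounding is a nice clarification of a step the paper performs silently, but it is not a different argument.
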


\begin{proof}
    The proof is analogous to the proof of Theorem~\ref{thrm:sum-improved}. We
    estimate the number of different $f$-monochromatic rectangles, and then
    apply Lemma~\ref{lemma:rectangles} to obtain a lower bound on the
    communication complexity.

    Denote $\Phi \triangleq \FF^n \setminus \{0,1\}$. At first, we count the
    number of rectangles on the main diagonal. We define:
    \[
        \FFF_0 = \left\{(Y, \Phi {\setminus} Y): Y \subseteq \Phi \right\}
        \triangleq \{(Y_i, Y'_i): i \in [2^{2^n-2}]\}.
    \]

    Then, for every $(Y_i, Y'_i) \in \FFF_0$:
    \[
        f(Y_i, Y'_i) = \prod_{i=2}^{2^n-1} i = (2^n-1)!.
    \]

    Take $i,j \in [2^{2^n-2}]$ such that $i \neq j$. We consider two cases:
    \begin{itemize}
        \item If $Y_i \cup Y'_j \neq \Phi$, then there exists $x \in \Phi$, such
            that $x \not\in Y_i \cup Y'_j$. Then,
            \[
                f(Y_i,Y'_j) < (2^n-1)!.
            \]
        \item
            If $Y_i \cup Y'_j = \Phi$, since $Y_i \neq Y_j$, there exists $x \in
            Y_i \cap Y'_j$, thus $x \not\in Y'_i \cup Y_j$. Then,
            \[
                f(Y_j, Y'_i) < (2^n-1)!.
            \]
    \end{itemize}

    Due to Lemma~\ref{lemma:rectangle-diagonal}, there exist at least $2^{2^n-2}$ different $f$-monochromatic
    rectangles in $\FFF_0$.

    Additional $f$-monochromatic rectangles can be constructed as follows. For
    every $\ell \in \{2, \ldots, 2^n-1\}$, denote
    $\Phi_\ell \triangleq \FF^n \setminus\{0,1,\ell\}$. We define the pairs
    \[
        \FFF_\ell = \left\{(Z, \Phi_\ell {\setminus} Z): Z \subseteq
        \Phi_\ell\right\} \triangleq \{(Z_i, Z'_i): i \in [2^{2^n-3}]\}.
    \]

    Then, for every pair $(Z_i, Z'_i) \in \FFF_\ell$ we have that
    \[
        f(Z_i, Z'_i) = \prod_{\substack{i=2 \\ i\neq\ell}}^{2^n-1} i =
        \frac{(2^n-1)!}{\ell}.
    \]

    Take $i,j \in [2^{2^n-3}]$ such that $i \neq j$. Then, similarly to the
    proof of Theorem~\ref{thrm:sum-improved}, either
    \[
        f(Z_j,Z'_i) < \frac{(2^n-1)!}{\ell}
    \]
    or
    \[
        f(Z_i,Z'_j) < \frac{(2^n-1)!}{\ell} \; .
    \]

    From Lemma~\ref{lemma:rectangle-diagonal}, the set $\FFF_\ell$ contains $2^{2^n-3}$ $f$-monochromatic rectangles.
    We can choose $\ell$ in $2^n-2$ ways, and thus the number of
    $f$-monochromatic rectangles in $\FFF_\ell$, $\ell \neq 0$, is
    \begin{eqnarray}
        (2^n-2) \cdot (2^{2^n-3}) \; . \label{eq:prod_fl}
    \end{eqnarray}
    There is at least one additional $f$-monochromatic rectangle
    corresponding to the value $0$ of the function $f$. By summing things
    up, we obtain that the total number of $f$-monochromatic rectangles is
    at least
    \begin{eqnarray*}
        \cR(f) & \ge & 2^{2^n-2} + (2^n-2) \cdot (2^{2^n-3}) + 1 \\
        & = & 2^{2^n+n-3} + 1.
    \end{eqnarray*}

    Due to Lemma~\ref{lemma:rectangles}, by rounding up to the next
    integer, the communication complexity of a protocol $F$ computing $f$ as
    defined in Equation~\ref{eq:funct_prod} is at least $\comm(F) \geq 2^n+n-2$.
\end{proof}

\section{Connections to Known Problems}
\label{sec:connections}

\subsection{Lower Bounds using Results for Set Disjointness}
\label{sec:reduction-set-disjointness}

Given two sets $S_A, S_B \subseteq \FF^n$, the binary set disjointness function
$\disj{(S_A, S_B)}$ is defined as follows:
\begin{equation}
    \disj{(S_A, S_B)} = \left\{ \begin{array}{cl}
    1 & \mbox{ if } S_A \cap S_B = \varnothing \\
    0 & \mbox{ otherwise }
    \end{array} \right. \; .
\end{equation}

\noindent
{\bf Set disjointness problem:} there are two users $A$ and $B$ that possess the
sets $S_A, S_B \subseteq \FF^n$, respectively. The users want to compute jointly
the function $\disj{(S_A, S_B)}$.

We show a simple reduction from the set disjointness problem to the sum
computation problem.

\noindent
\textbf{Reduction:} assume that $F$ is a protocol for computing $f$
in~(\ref{eq:funct_sum}) by $A$ and $B$. Then, given $S_{A}$ and $S_{B}$, the set
disjointness problem can be solved by $A$ and $B$ as follows.
\begin{enumerate}
\item
The user $A$ sends to $B$ a special bit, indicating if $\regzero \in A$.  If
$\regzero \in A \cap B$, then $B$ announces that $\disj{(S_{A}, S_{B})} = 0$.
Halt.
\item
The users $A$ and $B$ compute $x_A = \sum_{x \in S_{A}} x$ and $x_B = \sum_{x
\in S_{B}} x$, respectively.
\item
The users $A$ and $B$ run the protocol $F$ to find $y = f(S_{A}, S_{B})$.
\item
User $B$ sends $x_B$ to $A$.
\item
If $x_A + x_B = y$, then $A$ concludes that $\disj{(S_{A}, S_{B})} = 1$.
Otherwise, if $x_A + x_B \neq y$, then $\disj{(S_{A}, S_{B})} = 0$.
\end{enumerate}
The correctness of the protocol is straightforward, given that $S_{A} \cap
S_{B} = \varnothing$ if and only if $x_A + x_B = y$ and $\regzero \notin A \cap
B$.

A single bit is sent in Step 1 and $2n-1$ bits are required to represent the
integer value of $x_B$ in Step 4. Thus, the communication complexity of the
proposed protocol for the set disjointness problem is $\comm(F) + 2n$. Then, the
upper bound for set disjointness problem is $\comm(F) + 2n \geq \comm(\disj)$.

There is a variety of known bounds on
communication complexity of the two-party protocols for the set disjointness
problem. For example, for deterministic protocols, there is a lower bound of $2^n
+ 1$ bits \cite{Kushilevitz-Nisan} using fooling sets, and for randomized
protocols the asymptotically tight bound is
$\Theta(2^n)$~\cite{DBLP:journals/jcss/Bar-YossefJKS04,
DBLP:journals/toc/HastadW07, KS, Razborov}. From these bounds, we obtain the
lower bounds $\comm(F) \geq 2^n - 2n+1$ for deterministic and $\comm(F) =
\Omega(2^n)$ for randomized case of function computation problem.

Recall that for the deterministic case, there is an upper bound
of $O(2^n)$ for sum computation problem (see discussion at the end of Section~\ref{subsection:sum_integers}),
which is also an upper bound on complexity of any randomized protocol, thus yielding an asymptotically tight bound of $\Theta(2^n)$ for randomized settings.

\begin{table*}[t]
\begin{center}
    \begin{tabular}{|p{0.15\textwidth}|c|p{0.30\textwidth}|p{0.26\textwidth}|}
        \hline \hline
        Communication Complexity & Protocol Type & Comments & Source
        \\
        \hline
        $\Theta (d \cdot n)$
        &
        Deterministic
        &
        Reconciliation first, difference size is $d$
        &
        Section~\ref{sec:reconciliation} and \cite{Minsky}
        \\
        $\ge 2^n + n - 1$
        &
        Deterministic
        &

        &
        Subsection~\ref{subsection:sum_integers}
        \\
        $\le 2^n + 2n - 2$
        &
        Deterministic
        &

        &
        Subsection~\ref{subsection:sum_integers}
        \\
        $\ge 2^n - 2n + 1$
        &
        Deterministic
        &
        Reduction to set disjointness
        &
        Subsection~\ref{sec:reduction-set-disjointness} and
        \cite{Kushilevitz-Nisan}
        \\
        $\Theta(2^n)$
        &
        Randomized
        &
        Reduction to set disjointness
        &
        Subsections~\ref{subsection:sum_integers}, \ref{sec:reduction-set-disjointness}
        and \cite{KS, Razborov, DBLP:journals/jcss/Bar-YossefJKS04} 		
        \\
        $O(\kappa) + 4n$
        &
        Shared randomness
        &
        Reduction to finding the intersection, set sizes are $\kappa$
        &
        Subsection~\ref{sec:finding_intersection} and \cite{BCKWY} 	
        \\	
        $O(\kappa) + 4n + O(\log n)$
        &
        Private randomness
        &
        Reduction to finding the intersection, set sizes are $\kappa$
        &
        Subsection~\ref{sec:finding_intersection} and \cite{BCKWY} 		
        \\
        $O( \kappa \cdot \log d_A + n)$
        &
        ``Las Vegas'' type
        &
        Set sizes are $\kappa$, $d_{A} = |S_{A} \setminus S_B|$
        &
        Section~\ref{sec:hash_functions}
        \\
        \hline
        \hline
    \end{tabular}
\end{center}
\caption{Communication complexity for a sum computation problem on the
reconciled data}
\label{table:complexity}
\end{table*}

\subsection{Upper Bound using Finding the Intersection Problem}
\label{sec:finding_intersection}

Another related problem is \emph{finding the intersection}~\cite{BCKWY}, in
which the users $A$ and $B$ are interested in finding the intersection of the sets that they possess. \smallskip

{\bf Finding the intersection problem:} there are two users $A$ and $B$ that possess the
sets $S_A, S_B \subseteq \FF^n$, respectively. The users want to compute jointly
the function $S_A \cap S_B$.\smallskip

A protocol for this problem can be used to
compute a sum (or, for example, a product) of the reconciled sets.

The following result is proved in~\cite{BCKWY} for the sets of size at most
$\kappa$.

\begin{theorem}~\cite[Theorem 3.1]{BCKWY}
There exists an $O(\sqrt{\kappa})$-round constructive randomized protocol for
finding the intersection problem with success probability $1 - 1/{\mbox{\sc
poly}(\kappa})$.  In the model of shared randomness the total communication
complexity is $O(\kappa)$ and in the model of private randomness it is $O(\kappa
+ \log n)$.
\label{thrm:intersection}
\end{theorem}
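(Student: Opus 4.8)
The plan is to prove a matching lower and upper bound, with essentially all the work in the upper bound and its round count. The lower bound is immediate: each party must learn which of its own (at most $\kappa$) elements lie in the intersection, i.e.\ a $\kappa$-bit indicator relative to its own input, so any correct protocol must move $\Omega(\kappa)$ bits; this is why $O(\kappa)$ is the target. Crucially, measuring the output relative to each party's \emph{own} input is what makes $O(\kappa)$ possible at all --- an absolute description of an arbitrary size-$\kappa$ subset of the universe would cost $\Theta(\kappa \log \kappa)$, and avoiding this extra logarithmic factor is the entire difficulty.

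First I would reduce the universe. Using shared randomness, both parties apply a common hash $h : \FF^n \to [\text{poly}(\kappa)]$ to their elements. A birthday/union-bound argument shows that, with probability $1 - 1/\text{poly}(\kappa)$, the map $h$ is injective on the at most $2\kappa$ elements actually present, so $h(S_A) \cap h(S_B) = h(S_A \cap S_B)$ and the hashed instance has the same intersection as the original (each party keeps the local inverse to translate the answer back). This removes the dependence on $n$ in the shared-randomness model; in the private-randomness model the parties must additionally agree on $h$, which is where the extra $O(\log n)$ term in the bound arises.

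Next I would solve the reduced instance --- finding the intersection of two size-$\le\kappa$ subsets of $[\text{poly}(\kappa)]$ --- by an interactive matching procedure built on a round-bounded equality-testing primitive. The idea is to let the two parties certify, block by block, which of their elements can possibly match, exchanging short fingerprints rather than full $O(\log\kappa)$-bit labels, and to recurse only on the surviving candidates, confirming true matches and discarding hash collisions in a final verification pass. Each confirmed common element and each discarded candidate is charged $O(1)$ amortized bits, so the communication telescopes to $O(\kappa)$; the number of refinement levels is tuned to $O(\sqrt{\kappa})$, giving the stated round count.

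The main obstacle is precisely this round--communication trade-off. A one-shot protocol inevitably pays a $\log\kappa$ factor, because naming the occupied blocks costs $\Omega(\log\kappa)$ bits per element; the heart of the argument is to show that spending $O(\sqrt{\kappa})$ rounds lets the successive fingerprint lengths shrink (the iterated-logarithm behaviour of round-bounded equality testing), so that the per-element overhead collapses to $O(1)$ and the total stays $O(\kappa)$. The remaining work is bookkeeping: union-bounding the failure probabilities across all blocks and all rounds so that the overall success probability remains $1 - 1/\text{poly}(\kappa)$, and checking that the final verification pass removes every false positive introduced by the hashing.
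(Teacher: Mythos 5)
A point of order first: the paper does not prove this statement at all --- it is imported verbatim as \cite[Theorem 3.1]{BCKWY} and used as a black box in Subsection~\ref{sec:finding_intersection}. So there is no ``paper's own proof'' to compare against; your proposal has to be judged against the actual argument in \cite{BCKWY}, which is the substantial technical content here.

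On its own terms, your sketch has the right skeleton --- shared-randomness hashing of the universe down to $[\mathrm{poly}(\kappa)]$, followed by an iterative fingerprint-refinement procedure whose round/communication trade-off is governed by the iterated-logarithm behaviour of round-bounded equality testing --- and this is indeed the shape of the protocol in \cite{BCKWY}. But two genuine gaps remain. First, the entire difficulty of the theorem sits inside the step you dispatch in one sentence: the claim that each confirmed element and each discarded candidate ``is charged $O(1)$ amortized bits'' within $O(\sqrt{\kappa})$ rounds is not derived, it is asserted --- and that assertion \emph{is} the theorem. You give no concrete protocol (how elements are bucketed, what the fingerprint length is at refinement level $i$, how hash collisions are detected and rolled back), no recurrence bounding the total communication, and no union bound over blocks and rounds; ``tuned to $O(\sqrt{\kappa})$'' is exactly the parameter choice whose feasibility needs proof, via the amortized-equality machinery of Feder--Kushilevitz--Naor--Nisan combined with the bucketing analysis of \cite{BCKWY}. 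Second, the private-randomness term is wrong as argued: specifying a universal hash function from $\FF^n$ to $[\mathrm{poly}(\kappa)]$ costs $\Theta(n)$ bits, not $O(\log n)$, so ``agreeing on $h$'' does not explain the $O(\kappa + \log n)$ bound. The additive $O(\log n)$ comes from a different mechanism (Newman-style sparsification of the shared random string, or an explicit seed-length reduction), and invoking Newman's theorem naively clashes with the word ``constructive'' in the statement. So the proposal is a reasonable roadmap, but the two claims it leans on are precisely the content of \cite{BCKWY}, not consequences of the sketch.
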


Assume that there is a protocol for computing the intersection $S_A \cap S_B$.
Then, the users can run the following protocol for computing the sum on the
reconciled data.
\begin{enumerate}
\item
$A$ and $B$ compute $S_A \cap S_B$.
\item
$A$ and $B$ compute $x_A = \sum_{x \in S_{A}} x$ and $x_B = \sum_{x \in S_{B}}
x$, respectively.
\item
$A$ and $B$ exchange the values of $x_A$ and $x_B$.
\item
Each user computes the result by computing $x_A + x_B -  \sum_{x \in S_{A} \cap
S_{B}} x$.
\end{enumerate}
By using Theorem~\ref{thrm:intersection}, the total number of communicated bits
is $O(\kappa) + 4n$ in the shared randomness model and  $O(\kappa) + 4n + O(\log
n)$ in the private randomness model.

\section{Using Hash Functions}
\label{sec:hash_functions}
\subsection{Setting}
\label{subsection:all:setting}

In this section, we construct a ``Las Vegas'' type randomized protocol for
computing the function $f$ as defined in~(\ref{eq:funct_sum}).

The proposed protocol is based on the use of universal hash functions~\cite{Carter-Wegman}, as follows. Let $H
\triangleq \FF^k$ and $\cH = \{ h \}$ be a family of all hash functions $h :
\FF^n \rightarrow H$, such that
\begin{equation}
\forall K \in H, \forall h \in \cH \; : \; | \left\{ x \; : \; h(x) = K \right\}
| = 2^{n-k} \; . \label{eq:hash_function}
\end{equation}
Assume that functions $h \in \cH$ are chosen randomly uniformly from $\cH$, and
independently from the previous choices.  Hereafter, we can assume that before
the protocol is executed, $A$ and $B$ agree on some random order of $h_0, h_1,
h_2, \cdots \in \cH$, which are used in the protocol.

\subsection{Protocol}

The pseudocode of the proposed protocol is presented as
Algorithm~\ref{alg:protocol}.

\begin{algorithm}[ht]
\begin{algorithmic}[1]
    \Procedure{Protocol}{}
        \For{$i=0$; true; $i=i+1$} \label{prot:loopstart}
            \State $B$ sends the set $K_i = \{h_i(x): x \in S_B\}$ to $A$
            \label{prot:send0}
            \State $A$ creates empty set $L_i$
            \For{$x \in S_A$}
                \If{$h_i(x) \not\in K_i$}
                    \State $A$ adds $x$ to $L_i$
                \EndIf
            \EndFor
            \If{$|L_i| = d_A$}
                \State \textbf{break} \label{prot:break}
            \EndIf
        \EndFor \label{prot:loopend}
        \State $A$ sends $s = \sum_{x \in L_i} x$ to $B$ \label{prot:send1}
        \State $B$ computes $s' = s + \sum_{x \in S_B} x$
        \State $B$ sends $s'$ to $A$ \label{prot:send2}
    \EndProcedure
\end{algorithmic}
\caption{Protocol pseudocode}
\label{alg:protocol}
\end{algorithm}

\subsection{Communication complexity}

Below, we estimate communication complexity of the proposed protocol.
While the main idea of the protocol is relatively straightforward,
the detailed analysis requires some nontrivial elaboration.

There are three statements, where the data is sent between the users: in lines
\ref{prot:send0}, \ref{prot:send1} and \ref{prot:send2}. We denote the
corresponding number of bits sent during each statement as $t_0$, $t_1$ and
$t_2$. We have:
\begin{align}
    t_0 &= k  m_B \; ,\label{eq:trans0} \\
    t_1 &= 2n-1 \; ,     \label{eq:trans1} \\
    t_2 &= 2n-1 \; .     \label{eq:trans2}
\end{align}

\subsection{Success Probability}
\label{subsection:all:success}

Below, we estimate the probability of the loop in lines
\ref{prot:loopstart}--\ref{prot:loopend} to end with a break statement in line
\ref{prot:break}. The number of loops determines the total number of
communicated bits.

In this analysis, we assume that the hash functions
satisfy~(\ref{eq:hash_function}).  Then, the collision probability for a
randomly chosen $h \in \cH$ is
\begin{eqnarray}
    \Pr[\mbox{collision}] & = &
		\Pr\left[h(x) = h(y) | x \in \FF^n, y \in \FF^n, x \neq y\right] \nonumber \\
		& = & \frac{2^{n-k}-1}{2^{n}-1} \; .
    \label{eq:coll}
\end{eqnarray}

The break statement in line \ref{prot:break} is activated when $|L_i| = d_A$ for
some $i$.

If $x \in S_0$, then $h(x) \in K_i$. Otherwise, if $x \in S_A \setminus S_0$,
then $h(x) \not\in K_i$ only if there is no collision with an element in $K_i$:
\begin{align}
    \Pr[|L_i|=d_A] & = \Pr[\mbox{no collision for every } x \in S_A \setminus S_0] \nonumber
    \\
    & = \left(1 - \frac{2^{n-k}-1}{2^{n}-1} \right)^{d_A} \; . \label{eq:stopping}
\end{align}

\subsection{Number of communicated bits}
\label{subsection:transbits}

Next, we compute the number of communicated bits $T_r$ during $r \in \nn$
rounds. For brevity, we denote
\begin{align}
    \pacc &= \Pr[\mbox{accept}] = \Pr[|L_i|=d_A]\\
    \pno &= \Pr[\mbox{not accept}] = 1 - \pacc.
\end{align}
Here, $\pacc$ is a probability that the protocol succeeds in computing the sum
of all elements.

At first, we look at the cases where we limit the number of rounds to 1, 2 and
3. To express the expected number of communicated bits in an instance of the
protocol, which succeeds after at most $r$ rounds, we use the random variable
$T_r$, $r \in \NN$. We have:
\begin{align*}
    E[T_1] & = \; \pacc (t_0 + t_1) + t_2 \; ,  \\
    E[T_2] & = \; \pacc (t_0 + t_1) + \pno \pacc (t_0 + t_0 + t_1) + t_2 \; , \\
    E[T_3] & = \; \pacc (t_0 + t_1) + \pno \pacc (t_0 + t_0 + t_1) \\
		& + \; \pno \pno  \pacc (t_0 + t_0 + t_0 + t_1) + t_2 \; .
\end{align*}

In general, when bounding the number of rounds by $r$, the number of the
communicated bits is
\begin{equation}
    E[T_r] = \sum_{i=0}^{r-1} \pno^i \pacc ( (i+1) t_0 + t_1 ) + t_2 \; .
    \label{eq:exp_transbits}
\end{equation}

By allowing an unbounded number of rounds, we obtain
\begin{align}
    E[T_\infty] - t_2 &= \sum_{i=0}^{\infty} \pno^i \pacc ( (i+1) t_0 + t_1) \nonumber \\
    & = \pacc t_0 \sum_{i=0}^\infty \pno^i (i+1) + \pacc t_1 \sum_{i=0}^\infty
    \pno^i \nonumber \\
    & = \pacc t_0 \frac{\pno}{(1-\pno)^2} + \pacc t_0 \frac{1}{1-\pno} + \pacc
    t_1 \frac{1}{1-\pno} \nonumber \\
    & = \pacc t_0 \frac{\pno}{\pacc^2} + \pacc t_0 \frac{1}{\pacc} + \pacc t_1
    \frac{1}{\pacc} \nonumber \\
    & = t_0 \frac{\pno}{\pacc} + t_0 + t_1 \nonumber \\
    & = t_0 \pacc^{-1} + t_1 \label{eq:exp_inf_transbits} \nonumber \\
    & = t_0 \left(1-\frac{2^{n-k}-1}{2^{n}-1}\right)^{-d_A} + t_1 \; .
\end{align}

By using equations~(\ref{eq:trans0})-(\ref{eq:trans2}), we obtain
\begin{equation}
    E[T_\infty] = k m_B \left(1-\frac{2^{n-k}-1}{2^{n}-1}\right)^{-d_A} + 4n - 2
    \; . \label{eq:transmitted_bits}
\end{equation}
Given $m_B$, $d_A$ and $n$, we next find
\[
    \argmin \limits_{k} k m_B \left(1-\frac{2^{n-k}-1}{2^{n}-1}\right)^{-d_A} +
    4n - 2\; ,
\]
in order to determine the optimal value of $\comm(F)$, which minimizes the total
number of communicated bits.

For simplicity, we assume that $k \ll n$ (otherwise, the hashing approach is not
efficient).  Under that assumption,
\[
    \comm(F) = \argmin \limits_{k} \left\{ k m_B (1-2^{-k})^{-d_A} + 4n - 2
\right\} \; .
\]
By substituting $k = \log_2 (\frac{d_A}{c})$, where $c$ is a constant, we
obtain:
\begin{align*}
    & k m_B (1-2^{-k})^{-d_A} + 4n - 2 \\
    & \approx k m_B \left(1- \frac{c}{d_A}\right)^{-d_A} + 4n-2 \\
    &= O( m_B \cdot \log d_A + n) \; .
\end{align*}

\section{Summary and Future Work}
\label{sec:summary}

In this work, we initiated a study of a new problem called \emph{function
computation on the reconciled data}. The problem considers a scenario where two
users possess sets of vectors $S_{A}$ and $S_{B}$, respectively, and they aim at
computing the value of $\phi(S_{A} \cup S_{B})$ for some function $\phi$. We
considered simple cases of $\phi$, such as identity, maximum, minimum, sum,
product. Specifically, for sum, we derived a number of lower and upper bounds on
communication complexity (for different models of randomness). We showed
connections to some known problems in communication complexity.  Finally, we
proposed a ``Las Vegas'' type randomized algorithm and analyzed its
communication complexity.

Many intriguing questions are still left open. Specifically, it would be
interesting to obtain tight bounds, and to design efficient protocols, for
computation of various functions. Different models of randomness can be
considered. Finally, protocols for a number of users larger than two can also be
investigated.

\section{Acknowledgements}
The authors wish to thank Dirk Oliver Theis for helfpul discussions and for
pointing out the connection to the set disjointness problem.

\end{document}